\documentclass[12pt,a4paper]{article}

\addtolength\textheight{100pt}
\addtolength\topmargin{-50pt}

\addtolength\textwidth{70pt}
\addtolength\oddsidemargin{-35pt}
\addtolength\evensidemargin{-35pt}

\usepackage{amsfonts}
\usepackage{amssymb}
\usepackage{amsmath}
\usepackage{amsthm}
\usepackage{amscd} 
\usepackage{graphicx}
\usepackage{color}
\usepackage[colorlinks=true,citecolor=black,linkcolor=black,urlcolor=blue]{hyperref}
\usepackage{comment}

\theoremstyle{plain}
\newtheorem{theorem}{Theorem}[section]
\newtheorem{lemma}{Lemma}[section]
\newtheorem{corollary}{Corollary}[section]
\newtheorem{proposition}{Proposition}[section]

\theoremstyle{definition}
\newtheorem{definition}{Definition}[section]

\theoremstyle{remark}

\newcommand{\beqa}{\begin{eqnarray}}
\newcommand{\eeqa}{\end{eqnarray}}

\newcommand{\dd}{\mathrm{d}}
\newcommand{\prt}{\partial}
\newcommand{\be}{\begin{equation}}
\newcommand{\ee}{\end{equation}}
\newcommand{\bea}{\begin{eqnarray}}
\newcommand{\eea}{\end{eqnarray}}
\def\ba#1\ea{\begin{align}#1\end{align}}
\def\bas#1\eas{\begin{align*}#1\end{align*}}
\newcommand{\bd}{\begin{displaymath}}
\newcommand{\ed}{\end{displaymath}}

\newcommand{\mt}{\textrm}
\newcommand{\mb}{\mathbb}
\newcommand{\mc}{\mathcal}

\definecolor{r}{rgb}{1,0,0}

\title{Next-to-leading order in the large $N$ expansion\\of the multi-orientable 
random tensor model}

\author{Matti Raasakka$^{a}$ and Adrian Tanasa$^{a,b}$\\
{\tt\small matti.raasakka@lipn.univ-paris13.fr, 
adrian.tanasa@ens-lyon.org}}

\date{}

\begin{document}

\maketitle

\begin{abstract}
In this paper we analyze in detail the next-to-leading order (NLO) of the recently obtained 
large $N$ expansion for the multi-orientable (MO) tensor model.
From a combinatorial point of view, we find the class of Feynman tensor graphs contributing 
to this order in the expansion.
Each such NLO graph is characterized by the property that it contains a certain non-orientable ribbon subgraph (a non-orientable {\it jacket}). 
Furthermore, we find the radius of convergence and 
the susceptibility exponent of the NLO series for this model. These results represent a first 
step towards the larger goal of defining an appropriate double-scaling limit for 
the MO\ tensor model. 
\end{abstract}

\newpage

\section{Introduction and motivation}
Following the success of matrix models in providing a partition function for 2-dimensional random discrete geometries and simplicial quantum gravity \cite{DGZ}, tensor models were introduced in the beginning of the 90's to generalize this approach to higher dimensions \cite{sasakura,ambjorn}. On the other hand, attempts to formulate a covariant path integral representation of loop quantum gravity led through spin foams to group field theory \cite{RR}, a generalization of tensor models obtained by replacing the tensor indices by representation labels of a Lie group, which encode additional quantum geometrical data. However, only recently a proper control over the perturbative series of these models has been gained by introducing restrictions on the class of simplicial geometries that arise from the model as its Feynman graphs. In particular, for the case of colored tensor models, and originally the colored group field theory introduced in \cite{gurau}, the so-called color labels of the graph edges impose special restrictions on their Feynman graphs, so that much of the topological data is encoded into the labels. This allowed to express the Feynman integrals of colored tensor models in terms of quantities associated to the combinatorics of the associated simplicial geometries \cite{Razvan}, reminiscent of the famous topological expansion of matrix models. (See \cite{GurauRyan} for a recent review on colored tensor models. A good general overview of the relations between loop quantum gravity, spin foams and group field theory can be found in \cite{Rovelli}.)

Recently, the large $N$ expansion and the double-scaling limit of colored tensor models have been under intensive study. It has been shown that the $1/N$ expansion is dominated in any number of dimensions by a subclass of triangulations with trivial topology, the so-called {\it melonic graphs} \cite{Razvan, GR}. The critical behavior of the leading order series was derived in \cite{BGRR}, which enabled a further study of the next-to-leading order (NLO) series in \cite{KOR}. Very recently, two different approaches, via a combinatorial classification of graphs \cite{GS} and via a resummation of the perturbative series \cite{DGR}, have made it possible to rigorously define a double-scaling limit for the colored tensor models, and study the family of leading order graphs in the limit.

These successes of the colored tensor model prompt us to ask if the results mentioned above can be extended to other models with less restrictions on the simplicial geometries than in the colored models. In this direction, the multi-orientable (MO) tensor model, introduced originally in \cite{mo} (see also \cite{praa} for a short review), represents an interesting generalization with respect to the colored models. In general, the partition function of a tensor model is defined so that the Feynman graphs of the model (which correspond to the individual summands in the pertubative series of the partition function) are dual to members in some class of closed simplicial complexes. The exact properties of this class depend on the restriction that the specific tensor model imposes on the gluing of simplices. In particular, the Feynman graphs of the MO model span a strictly larger class of 3d complexes than that of the corresponding colored model \cite{mo}, as illustrated in Fig.\ \ref{fig:class}, and thus enable us to test the universality of the properties of colored models uncovered in the recent literature.
\begin{figure}
\centering
\includegraphics[width=0.8\columnwidth]{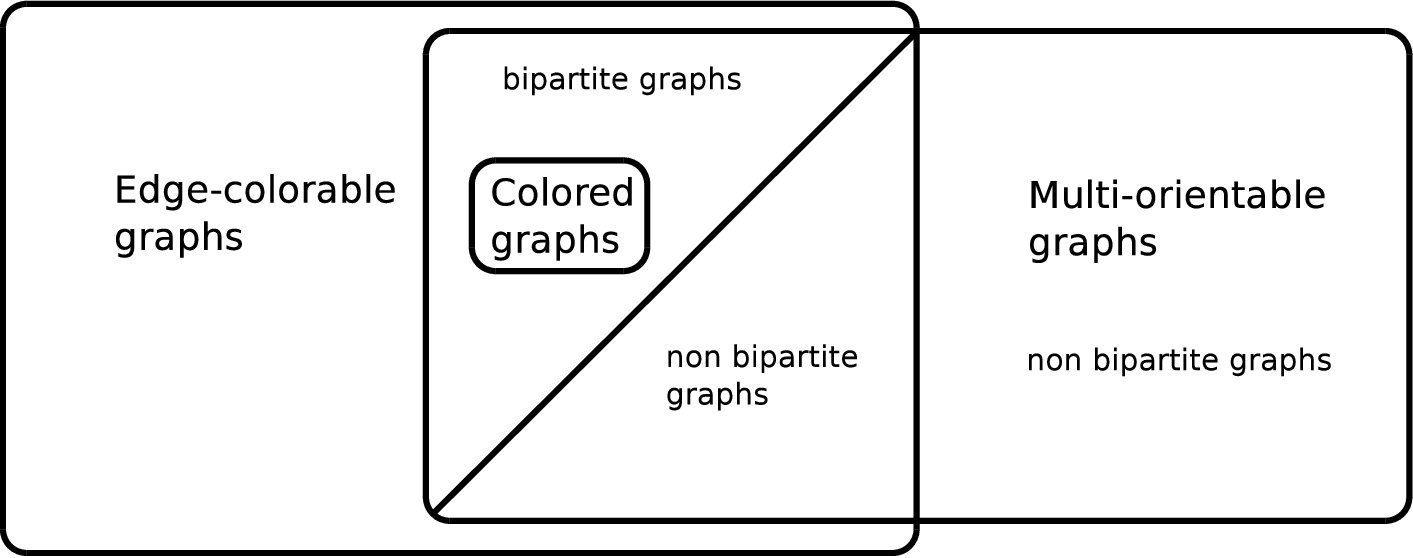}
\caption{\label{fig:class}Relations between different classes of tensor graphs \cite{Stephane}.}
\end{figure}

Motivated by these considerations, we study in the present paper the NLO graphs of the MO random tensor model. 
We find that these NLO graphs are distinct from the NLO graphs of the colored model, which is due to the wider class of graphs allowed by the MO model. Specifically, the NLO sector of the MO model is made by tensor graphs, which contain a non-orientable jacket, whereas all graphs of the colored model contain only orientable jackets. After a short review on MO tensor model in the following section, we classify the NLO graphs, and study the critical behavior of the NLO series in sections $3$ and $4$ below. In particular, we find the values of the radius of convergence and the susceptibility exponent.
The last section is dedicated to some concluding remarks and some perspectives for future research.

\section{The multi-orientable tensor model}

\subsection{Definition of the model; large $N$ expansion}

We recall here the definition of the multi-orientable (MO) tensor model and the implementation of the large $N$ expansion of this model. This follows \cite{Stephane}.

The random variables described by the MO tensor model are a complex-valued rank-3 tensor $\phi_{ijk}$ and its complex conjugate $\overline{\phi}_{ijk}$, where the indices take values in a range of integers $i,j,k=1,\ldots,N$.\footnote{This generalizes naturally to different ranges for the different indices \cite{Stephane}, but in this paper we will only consider tensors, whose indices take values in the same range.} The measure of the model consists of two parts: the Gaussian `kinetic' part, which yields an independently identically distributed Gaussian covariance for the tensor components, and a non-Gaussian perturbation controlled by a parameter $\lambda\in\mb{R}$. The partition function can be written as
\ba
	Z(\lambda,N) = \int \dd\phi \dd\overline{\phi}\ e^{-(S_0(\phi,\overline{\phi};N) + \lambda S_p(\phi,\overline{\phi};N))} \,,
\ea
where
\ba
	S_0(\phi,\overline{\phi};N) = \sum_{i,j,k\in [1,N]} \overline{\phi}_{ijk} \phi_{ijk} \quad\mt{and}\quad
	S_p(\phi,\overline{\phi};N) = \sum_{\substack{i,j,k,\\i',j',k'\\\in [1,N]}} \phi_{ijk} \overline{\phi}_{kj'i'} \phi_{k'ji'} \overline{\phi}_{k'j'i} 
\,.
\ea
Here, the partition function measure $\dd\phi \dd\overline{\phi}$ is the usual product of Lebesgue measures over the individual tensor components. As in other tensor models modelling random geometry, the pairing of indices in the perturbation term reflects the combinatorial identification of the boundary edges of the boundary triangles of a tetrahedron.

From the form of the partition function it follows that the Feynman graphs of the MO\ model are built from stranded 4-valent vertices (corresponding to the pertubation term in the action) and lines between the vertices (corresponding to the 
quadratic part of the action) illustrated in Fig.\ \ref{fig:propvert}, where each strand corresponds to one of the three indices of the tensor. Due to the choice of the action, the connectivity of the strands again reflects the combinatorial structure of a tetrahedron. The orientation of a line is marked by $+$ and $-$, corresponding to the tensor variable and its complex conjugate, respectively.
\begin{figure}
\centering
\def\svgwidth{0.5\columnwidth}
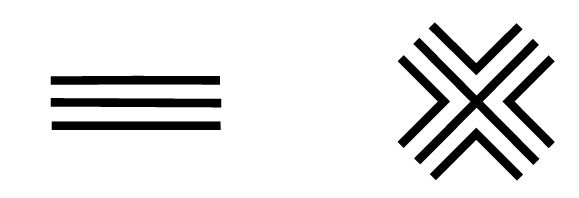
\caption{\label{fig:propvert}The stranded line and the stranded vertex of the Feynman graphs of MO\ tensor model.}
\end{figure}
The Feynman graphs of the model are obtained by gluing the vertices together, so that they are connected by the lines, i.e., so that each line has one end labelled by $+$ and the other by $-$. In the following, we will often drop these $\pm$ labels from the graphs for convenience, but they can of course be put back in whenever needed.

The strands of any graph built according to the above rules can be classified into different types according to their position in the vertices. First, we will call the strands that run the ``middle'' of the vertices the \emph{inner strands}. It is clear from the gluing rules that inner strands always connect to inner strands. The other strands are called \emph{outer}, and they can be further divided into two types, so that in any vertex the strands on the opposite sides of the vertex are of the same type. It was shown in \cite{Stephane} that also these types only connect to themselves. Accordingly, any closed loop formed by a strand, which we will call a \emph{face} in the following, has one of the three types (see Fig.\ \ref{fig:strandtypes} for an example, where the different types for the MO\ vertex and an example graph are denoted by different colors; the inner strands are green, while the outer strands are blue and red).
\begin{figure}
\centering
\def\svgwidth{0.65\columnwidth}
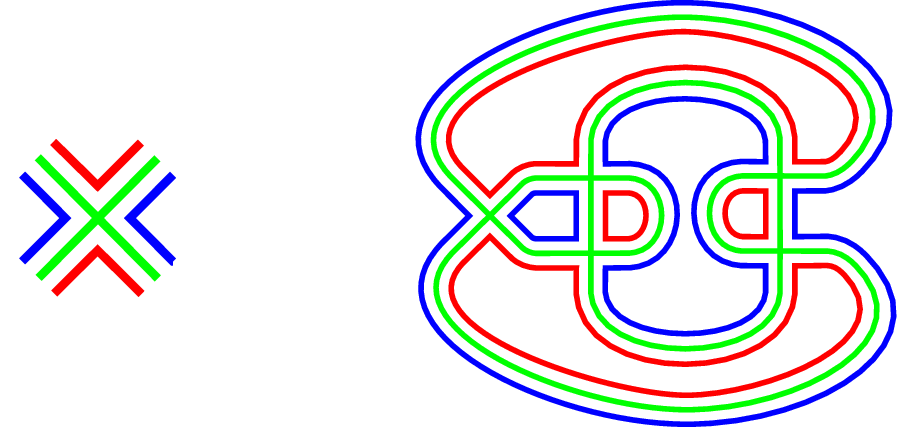
\caption{\label{fig:strandtypes}Three different types of strands in the MO\ vertex and an example graph distinguished by colors.}
\end{figure}

Taking advantage of the strand types, one may define the important notion of \emph{jackets} of a graph. These are the ribbon graphs formed by any two of the three types of strands. In Fig.\ \ref{fig:jacketex} we illustrate the three jackets of the example graph from the previous Fig.\ \ref{fig:strandtypes}.
\begin{figure}
\centering
\def\svgwidth{0.9\columnwidth}
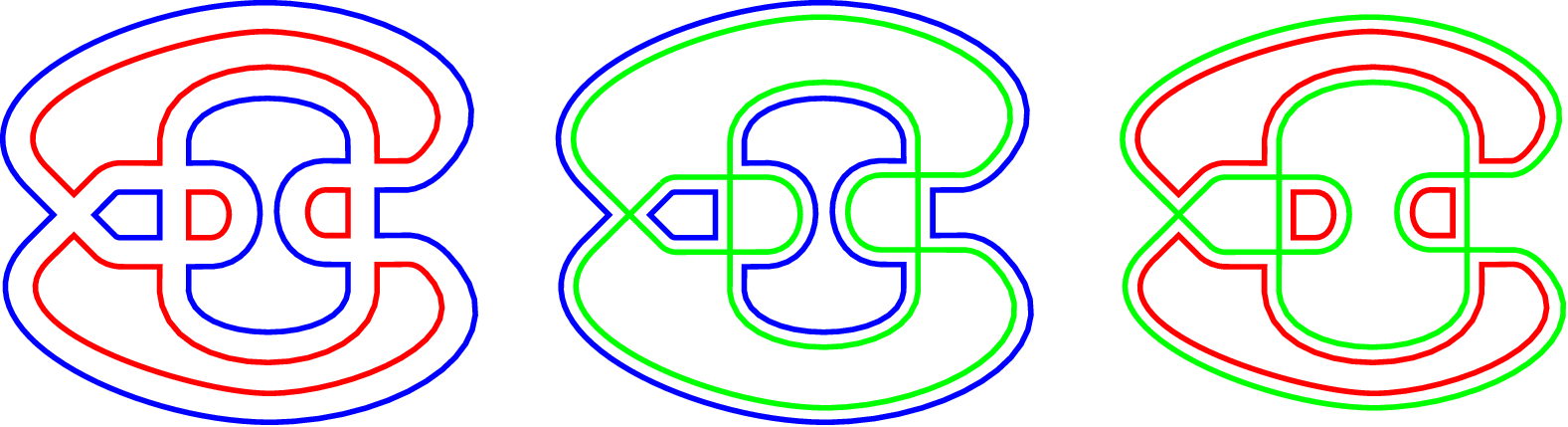
\caption{\label{fig:jacketex}The three jackets of the example MO graph in Fig.\ \ref{fig:strandtypes}.}
\end{figure}
The jackets of the MO\ tensor model are dual to triangulations of 2-dimensional surfaces, but unlike for colored models, these may be non-orientable. Therefore, the genera of these surfaces, defined in the following through the Euler characteristic formula, take also half-integer values.

We are mainly interested in the free energy $E(\lambda,N) = -\ln Z(\lambda,N)$ 
of the MO model.
This is obtained by summing over the connected vacuum Feynman graphs. Let us denote the number of vertices of an MO\ graph by $V$, the number of lines by $L$, and the number of faces by $F$. For a vacuum graph of the MO\ model we have $L=2V$, due to 4-valent vertices. Denote the jackets by $J$. Then, as show in \cite{Stephane}, the amplitude of a vacuum MO\ graph may be written as
\ba
	\mc{A} = \lambda^{V} N^{3-\omega} \,.
\ea
The \emph{degree} $\omega$ (which controls the large $N$ expansion) reads
\ba\label{eq:omega}
	\omega = \sum_{J} g_J = 3 + \frac{3}{2}V - F
\ea
using the Euler characteristic formula for the jacket genera $g_J = 1 - \frac{1}{2}(F_J - L_J + V_J)$, where $L_J=L$ and $V_J=V$ for all jackets $J$, and the fact that each face belongs to exactly two jackets.

\subsection{Leading order series: radius of convergence \& susceptibility exponent}

It was shown in \cite{Stephane} that the leading order connected vacuum graphs, for which $\omega=0$, are the same \emph{melonic graphs} as in colored models \cite{Razvan}. These are the graphs obtained from the so-called ``elementary melon'' graph by insertions of melonic 2-point subgraphs --- the so-called 3-dipole moves, see Figs.\ \ref{fig:3dip} and \ref{fig:melons}. Thus, the analysis of the leading order (LO) series of the MO\ model follows the one of the colored model (see \cite{BGRR} for the details). 
\begin{figure}
\centering
\def\svgwidth{0.5\columnwidth}
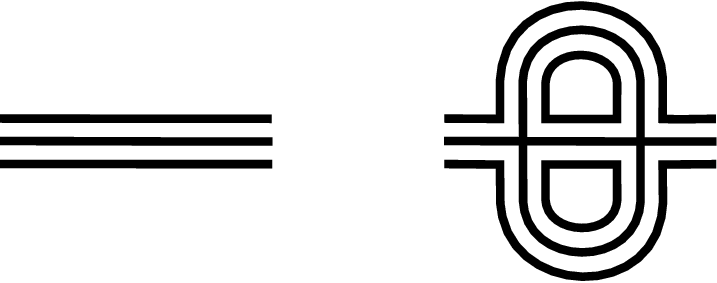
\caption{\label{fig:3dip}An insertion of the elementary melonic 2-point subgraph to an internal line --- the so-called 3-dipole move.}
\end{figure}
\begin{figure}
\centering
\def\svgwidth{0.98\columnwidth}
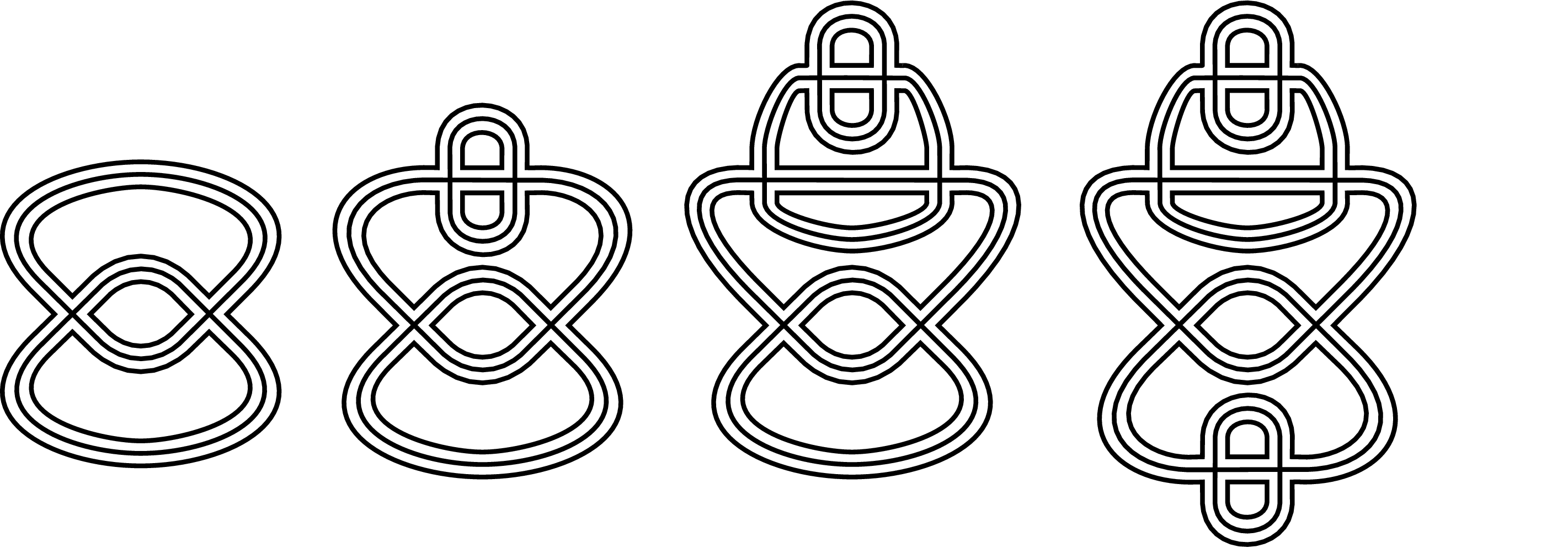
\caption{\label{fig:melons}An example of the generation of LO graphs from the elementary melon on the left via insertions of melonic 2-point subgraphs.}
\end{figure}

The leading order free energy is obtained by summing over the amplitudes of all melonic vacuum graphs. We are interested in the asymptotic behavior of this LO series around the critical value of the coupling constant $\lambda_{c,LO}$, i.e., in the limit of approaching the radius of convergence of the LO series. Let us 
state here the main result of this analysis \cite{BGRR,KOR}, which we will use in following. The LO ($\omega=0$) 2-point function $G_{LO}$ satisfies the scaling
\ba
	G_{LO}(\lambda) \sim \mt{const.} + \left( 1 - \frac{\lambda^2}{\lambda_{c,LO}^2}\right)^{\frac{1}{2}}
\ea
around the critical value $\lambda_{c,LO}$ of the coupling constant in the leading order. This is related to the behavior of the LO free energy, for which we obtain
\ba
	E_{LO}(\lambda) \sim \left( 1 - \frac{\lambda^2}{\lambda_{c,LO}^2}\right)^{2-\gamma_{LO}},
\ea
with the critical exponent, the so-called susceptibility exponent, being $\gamma_{LO}=\frac{1}{2}$.

\section{The class of next-to-leading tensor graphs}

In this section we identify, from a combinatorial and topological 
point of view, the class of NLO tensor graphs of the MO model.

\medskip
Since the MO model allows for non-orientable jackets, we have $g_J\in \mb{N}/2$, and the next-to-leading order (NLO) graphs satisfy $\omega=1/2$. A particularly simple example of such a graph is the double-tadpole $\Gamma_{2tp}$, see Fig.\ \ref{fig:2tp}, which has one non-orientable jacket with $g_J=\frac{1}{2}$. A whole class of NLO graphs can be derived from the double-tadpole via insertions of melonic two-point-functions into the propagators of $\Gamma_{2tp}$, also depicted. 
Following \cite{KOR}, let us give the following definition:

\begin{definition}\label{def:NLOcore}
	A {\bf NLO core graph} is a graph with $\omega=\frac{1}{2}$ and no melonic 2-point subgraphs.
\end{definition}
Accordingly, the double-tadpole is an NLO core graph, since it does not contain melonic 2-point subgraphs. All NLO graphs can be obtained by melonic insertions into the core graphs. Thus, the core graphs classify the NLO graphs into families related through insertions of melonic 2-point subgraphs.

To see this more clearly, let us prove the following:
\begin{lemma}\label{lem:2pf}
	Let $\Gamma$ be an MO vacuum Feynman graph, and $\Gamma'_2$ an MO two-point function subgraph of $\Gamma$. Let us denote by $\Gamma/\Gamma_2'$ the graph obtained by replacing $\Gamma'_2$ inside $\Gamma$ with a single propagator. 
We then have the relation
	\ba
		\omega(\Gamma) = \omega(\Gamma/\Gamma_2') + \omega(\Gamma') \,,
	\ea
	where $\Gamma'$ denotes the vacuum graph obtained by gluing the external legs of $\Gamma'_2$ to each other.
\end{lemma}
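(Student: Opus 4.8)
The plan is to reduce the identity to a count of faces and then prove that count by tracking strands. By \eqref{eq:omega} every vacuum MO graph satisfies $\omega = 3 + \tfrac{3}{2}V - F$. Since $\Gamma'$ is obtained from $\Gamma'_2$ by gluing its two external legs together, $V(\Gamma')=V(\Gamma'_2)$, while $\Gamma/\Gamma'_2$ retains exactly the vertices of $\Gamma$ lying outside $\Gamma'_2$; hence $V(\Gamma)=V(\Gamma/\Gamma'_2)+V(\Gamma')$. Substituting into the degree formula for each of the three graphs, the asserted relation $\omega(\Gamma)=\omega(\Gamma/\Gamma'_2)+\omega(\Gamma')$ becomes equivalent to
\[
  F(\Gamma)=F(\Gamma/\Gamma'_2)+F(\Gamma')-3 ,
\]
so the whole lemma reduces to establishing this face count.

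To prove it I would first read off the strand structure of the two-point subgraph $\Gamma'_2$. Each of its two external half-lines carries one strand of each of the three types, and since strands connect only to strands of the same type \cite{Stephane}, the non-closed strands inside $\Gamma'_2$ form exactly three arcs — one per type — each joining one external leg to the other, together with some number $c\ge 0$ of closed internal strand loops. Closing the two legs to form $\Gamma'$ turns each of the three arcs into a face, the new propagator supplying the missing strand of each type; hence $F(\Gamma')=c+3$.

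Next I would track what happens when $\Gamma'_2$ is spliced in place of the single propagator $\ell$ of $\Gamma/\Gamma'_2$. The line $\ell$ carries one strand of each type and therefore lies on exactly three faces of $\Gamma/\Gamma'_2$, one of each type; cutting $\ell$ and inserting $\Gamma'_2$ reroutes each of these faces through the corresponding typed arc of $\Gamma'_2$, and because each arc is a single path between the two legs and strand type is preserved across the two splicing points, each rerouted face stays a single closed face and no other face of $\Gamma/\Gamma'_2$ is touched. Thus the faces of $\Gamma$ are: the $F(\Gamma/\Gamma'_2)-3$ faces of $\Gamma/\Gamma'_2$ avoiding $\ell$, the three rerouted faces, and the $c$ internal loops of $\Gamma'_2$, so $F(\Gamma)=F(\Gamma/\Gamma'_2)+c$. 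Combining with $c=F(\Gamma')-3$ gives the displayed identity and hence the lemma.

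The only real difficulty is the bookkeeping of the last paragraph: one must check that the strand-type matching used at the two cut points coincides with the one used to close $\Gamma'_2$ into $\Gamma'$ (forced by type preservation), and that the insertion neither merges two faces of $\Gamma/\Gamma'_2$ nor splits one (again because a strand of one type never meets a strand of another). An alternative, essentially equivalent route would be to argue jacket by jacket: each jacket of $\Gamma$ is built from the corresponding jackets of $\Gamma/\Gamma'_2$ and $\Gamma'$ by a connected-sum-type gluing along a single propagator, under which the genus is additive; summing over the three jackets and using $\omega=\sum_J g_J$ then gives the result directly, the face count above being the aggregated form of this.
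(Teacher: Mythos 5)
Your proposal is correct and takes essentially the same route as the paper: both arguments rest on the two observations that closing the external legs of $\Gamma'_2$ creates exactly three new faces (one per strand type) and that contracting $\Gamma'_2$ to a propagator removes exactly its $V'$ vertices and its $F'$ internal closed faces, which combined with the degree formula \eqref{eq:omega} yields the additivity of $\omega$. Your version merely spells out the strand-tracking bookkeeping that the paper dismisses as ``clear,'' namely that the splicing reroutes the three faces through $\ell$ without merging or splitting any face.
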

\begin{proof}
	Let $V'$ and $F'$ be the numbers of vertices and faces of $\Gamma'_2$, respectively. By gluing the external legs of $\Gamma'_2$ to each other we create 3 additional faces, because we loop the three strands in the legs, and thus
	\ba\label{eq:omegap}
		\omega(\Gamma')=3 + \frac{3}{2}V' - (F'+3)= \frac{3}{2}V' - F' \,.
	\ea
	On the other hand, it is clear that by contracting $\Gamma'_2$ into a propagator of $\Gamma$, we decrease the number of vertices by $V'$ and the number of faces by $F'$. Thus,
	\ba
		\omega(\Gamma/\Gamma_2') = 3 + \frac{3}{2}(V-V') - (F-F') = \omega(\Gamma) - \omega(\Gamma') 
	\ea
	by the above expression (\ref{eq:omegap}) for $\omega(\Gamma')$.
\end{proof}

\begin{figure}
\centering
\def\svgwidth{0.95\columnwidth}
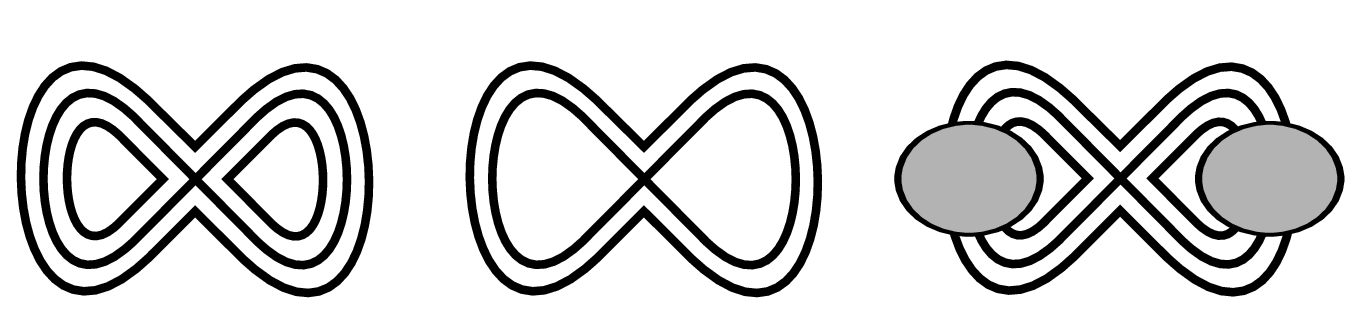
\caption{\label{fig:2tp}Left: The double-tadpole $\Gamma_{2tp}$. Middle: Its non-orientable jacket with $g_J=\frac{1}{2}$. Right: NLO graphs obtained from $\Gamma_{2tp}$ via insertions of melonic two-point functions $\Gamma^2_{mel}$.}
\end{figure}

One then has:

\begin{corollary}
	An insertion of a melonic two-point function into a propagator of an MO graph does not change the degree of the graph. In particular, all MO graphs of the form in Fig.\ \ref{fig:2tp}, right-hand-side, have $\omega=1/2$, and are therefore NLO.
\end{corollary}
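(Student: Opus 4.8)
The plan is to obtain the corollary as an immediate consequence of Lemma~\ref{lem:2pf}, the only extra ingredient being the structure of melonic two-point subgraphs recalled above: a melonic two-point function $\Gamma'_2$ is generated from the elementary melonic two-point graph by successive $3$-dipole moves (Figs.~\ref{fig:3dip} and~\ref{fig:melons}), so that closing its two external legs onto each other yields a \emph{melonic vacuum graph}. By the characterization of the leading order in \cite{Stephane,Razvan}, every melonic vacuum graph has degree $\omega=0$.

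First I would apply Lemma~\ref{lem:2pf} with $\Gamma'_2$ the inserted melonic two-point subgraph and $\Gamma$ the graph after insertion. The lemma gives $\omega(\Gamma)=\omega(\Gamma/\Gamma'_2)+\omega(\Gamma')$, where $\Gamma'$ is the vacuum graph obtained by gluing the external legs of $\Gamma'_2$ together. By the previous paragraph $\Gamma'$ is melonic, hence $\omega(\Gamma')=0$, so $\omega(\Gamma)=\omega(\Gamma/\Gamma'_2)$. Since contracting $\Gamma'_2$ back to a single propagator is exactly the inverse of the insertion, $\Gamma/\Gamma'_2$ is the graph before the insertion, and this proves that a melonic two-point insertion leaves the degree unchanged.

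For the second assertion I would start from the double-tadpole $\Gamma_{2tp}$, already shown to satisfy $\omega(\Gamma_{2tp})=\tfrac12$ via its single non-orientable jacket with $g_J=\tfrac12$. Any graph of the form in Fig.~\ref{fig:2tp}, right-hand side, arises from $\Gamma_{2tp}$ by a finite sequence of melonic two-point insertions into its propagators; applying the first part inductively along this sequence gives $\omega=\tfrac12$ for every such graph. Since $\omega\in\mb{N}/2$ and $\omega=0$ is the leading order, these graphs sit at $\omega=\tfrac12$ and are therefore NLO.

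I do not expect a genuine obstacle here. The one point that requires care is the claim that the closure $\Gamma'$ of a melonic two-point subgraph is a melonic \emph{vacuum} graph (and not merely a two-point function that happens to have $\omega=0$); but this is precisely how melonic two-point functions are built by $3$-dipole moves from the elementary melon, so the identification is immediate and the corollary follows with no further computation.
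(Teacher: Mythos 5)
Your proposal is correct and follows exactly the route the paper intends: the corollary is stated as an immediate consequence of Lemma~\ref{lem:2pf}, using the fact that the closure of a melonic two-point subgraph is a melonic vacuum graph with $\omega=0$, and then iterating insertions starting from the double-tadpole with $\omega(\Gamma_{2tp})=\tfrac12$. The paper gives no separate proof, so your slightly more explicit write-up (including the inductive step and the remark about the closure being a melonic \emph{vacuum} graph) is just a fleshed-out version of the same argument.
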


Let us now give the following definition:

\begin{definition}
A graph $\Gamma$ is {\bf 2-particle-irreducible (2PI)}, if it cannot be disconnected by cutting any two lines, or equivalently, if $\Gamma$ does not contain any proper non-trivial (other than the propagator) two-point function subgraphs.
\end{definition}

One then has:

\begin{lemma}
	A NLO core graph of the MO model is 2-particle-irreducible.
\end{lemma}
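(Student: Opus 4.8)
The plan is to argue by contradiction, the essential tool being the degree-additivity relation of Lemma~\ref{lem:2pf}. Suppose $\Gamma$ is an NLO core graph that is \emph{not} 2PI. Then $\Gamma$ contains a proper, non-trivial two-point function subgraph $\Gamma_2'$ (so $\Gamma_2'$ has at least one vertex). Let $\Gamma/\Gamma_2'$ be, as in Lemma~\ref{lem:2pf}, the graph in which $\Gamma_2'$ has been collapsed to a single propagator --- call it $p$ --- and let $\Gamma'$ be the vacuum graph obtained by gluing the two external legs of $\Gamma_2'$ to one another. Lemma~\ref{lem:2pf} then gives $\frac{1}{2} = \omega(\Gamma) = \omega(\Gamma/\Gamma_2') + \omega(\Gamma')$. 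Since each degree is a sum of jacket genera $g_J \in \mb{N}/2$, both summands are non-negative elements of $\mb{N}/2$, so exactly one of them equals $0$ and the other equals $\frac{1}{2}$. I would then treat these two cases.

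In the first case, $\omega(\Gamma') = 0$, so the closure $\Gamma'$ of $\Gamma_2'$ is a leading-order vacuum graph, i.e.\ a melonic graph in the sense recalled above --- one obtained from the elementary melon by $3$-dipole moves (see \cite{Razvan, BGRR}). This is precisely the statement that $\Gamma_2'$ is a melonic two-point subgraph of $\Gamma$, contradicting Definition~\ref{def:NLOcore}. In the second case, $\omega(\Gamma/\Gamma_2') = 0$, so $H := \Gamma/\Gamma_2'$ is itself a melonic vacuum graph. The complement of a proper two-point subgraph contains at least one vertex, and a melonic vacuum graph has an even number of vertices (as is clear from the degree formula~(\ref{eq:omega})), so $H$ has at least two vertices; being a non-trivial melonic graph, $H$ then contains a melonic two-point subgraph $M$. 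I would argue that $M$ can be chosen so as not to contain the distinguished propagator $p$: when $H$ is the elementary melon this is immediate, since removing any one of its four lines yields exactly the elementary melonic two-point function; for larger $H$ it follows from the recursive melonic structure, in which melonic two-point subgraphs are so plentiful that no single line can belong to all of them. Such an $M$ is untouched by the insertion of $\Gamma_2'$ and is therefore a melonic two-point subgraph of $\Gamma$, again contradicting Definition~\ref{def:NLOcore}. This closes the argument.

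The one genuinely non-routine point --- and the step I expect to be the main obstacle --- is the assertion in the second case that every non-trivial melonic vacuum graph contains a melonic two-point subgraph disjoint from a prescribed line. Establishing this cleanly requires the explicit combinatorial description of melonic graphs as iterated $3$-dipole insertions (equivalently, as rooted trees), as developed in \cite{Razvan, BGRR, Stephane}, rather than merely the degree formula~(\ref{eq:omega}); everything else is an immediate consequence of Lemma~\ref{lem:2pf} and the non-negativity of $\omega$.
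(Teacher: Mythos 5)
Your proof is correct and follows essentially the same route as the paper's: apply Lemma~\ref{lem:2pf}, use the non-negativity and half-integrality of the jacket genera to force exactly one of $\omega(\Gamma/\Gamma_2')$, $\omega(\Gamma')$ to vanish, and contradict the core-graph condition in each case. You are in fact more careful than the paper on the case $\omega(\Gamma/\Gamma_2')=0$, which the paper dismisses with a bare ``$\Rightarrow$ $\Gamma$ is not a core graph''; the combinatorial fact you flag (a melonic vacuum graph with at least two vertices contains a melonic two-point subgraph avoiding any prescribed line) is true and follows quickly from your own setup --- for the elementary melon choose the three lines complementary to $p$, and otherwise note via Lemma~\ref{lem:2pf} that the complement of an innermost elementary melon $M$ is a two-point function whose closure has degree $0$, hence is itself melonic, so either $M$ or its complement avoids $p$.
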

\begin{proof}
By definition, a core graph does not have melonic subgraphs. Note that the only non-trivial two-point functions with $\omega=0$ are the melonic ones. Now, let $\Gamma'_2$ be a proper non-trivial two-point function subgraph of $\Gamma$. It then follows from Lemma \ref{lem:2pf} that one cannot have a proper non-trivial two-point function subgraph $\Gamma'_2$ in an NLO core graph $\Gamma$:
	\begin{itemize}
	\item[(i)] If $\omega(\Gamma')=0$. $\Rightarrow$ $\Gamma$ is not a core graph.
	\item[(ii)] If $\omega(\Gamma')\geq \frac{1}{2}$, either 
		\begin{itemize}
		\item $\omega(\Gamma/\Gamma_2')=0$. $\Rightarrow$ $\Gamma$ is not a core graph, or otherwise
		\item $\omega(\Gamma/\Gamma_2')\geq\frac{1}{2}$, so we have $\omega(\Gamma) = \omega(\Gamma/\Gamma'_2) + \omega(\Gamma') \geq 1$ $\Rightarrow$ $\Gamma$ is not NLO.
		\end{itemize}
	\end{itemize}
	Thus, none of the possibilities for a proper non-trivial two-point function subgraph $\Gamma'_2$ allow $\Gamma$ to be an NLO core graph and 
this concludes the proof.
\end{proof}

What remains to be proven, then, is that the double-tadpole is the \emph{only} NLO core graph:

\begin{proposition}\label{prop:NLOcore}
	The only NLO core graph of the MO model is the double-tadpole of Fig. \ref{fig:2tp}.
\end{proposition}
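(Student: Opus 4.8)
The plan is to feed the reductions already in hand into an analysis of the non-orientable jacket that $\omega=\tfrac12$ forces on a core graph, and thereby to localize the non-orientability to a single self-loop. By the previous lemma, an NLO core graph $\Gamma$ is 2PI with $\omega(\Gamma)=\tfrac12$, so the degree formula $\omega=3+\tfrac32 V-F$ gives that $V$ is odd and $F=\tfrac{3V+5}{2}$; and writing $\omega=\sum_J g_J$ over the three jackets, $\omega=\tfrac12$ forces exactly one jacket $J_0$ to be non-orientable of genus $\tfrac12$ (a projective plane) while the other two are planar. The case $V=1$ is immediate: one checks directly that a $V=1$ vacuum graph with $\omega=\tfrac12$ must be the double-tadpole of Fig.~\ref{fig:2tp}. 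So it remains to exclude core graphs with $V\ge 3$.

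The first step is an easy structural observation: a 2PI MO vacuum graph with $V\ge 2$ has no tadpole line, i.e. no line that is a self-loop at a vertex. For if $\ell$ is such a self-loop at $v$, then either the other two half-edges of $v$ form a second self-loop, making $v$ an isolated component and $\Gamma$ disconnected, contradicting $V\ge 2$; or they lie on two distinct lines $\ell_1,\ell_2$ joining $v$ to the rest of $\Gamma$, and cutting $\ell_1,\ell_2$ separates the nonempty subgraph $\{v,\ell\}$, contradicting 2PI. Hence any core graph with $V\ge 2$ is tadpole-free, and since $V=1$ already yields the double-tadpole, it remains only to rule out tadpole-free core graphs, which necessarily have $V\ge 3$.

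The heart of the proof is then to show that a tadpole-free core graph cannot carry a non-orientable jacket. The mechanism I would use is that in an MO graph the ribbon edge of a jacket attached to a line is glued without a flip whenever the line runs between two distinct vertices, so that a M\"obius strip can close up inside $J_0$ only around a ``generalized tadpole'' — a line, or a path through a melonic 2-point subgraph, whose two ends come back to the same vertex. In a core graph there are no melonic 2-point subgraphs, so such a generalized tadpole would reduce to a genuine self-loop, which was just excluded for $V\ge 2$; hence $J_0$ cannot be non-orientable and $\omega\in\mb{Z}$, contradicting $\omega=\tfrac12$. Concretely, this amounts to reading off from the stranded vertex and line (Fig.~\ref{fig:propvert}) and their gluing rules, case by case over the three jacket types and the $\pm$-label patterns, whether each ribbon edge is parallel or twisted, and then exhibiting a consistent orientation of $J_0$ once the self-loops are gone.

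I expect this localization of non-orientability to be the main obstacle: it requires a genuinely local computation with the MO vertex together with a verification that no global orientation obstruction survives in the absence of tadpoles. If a clean statement of it proves elusive, the fallback is a direct combinatorial bound showing that a tadpole-free 2PI MO vacuum graph with $V\ge 3$ satisfies $\omega\ge 1$, using that such a graph is neither melonic nor reducible by a melonic 2-point subgraph and hence loses at least one face relative to the melonic value $F=3+\tfrac32 V$. In either case, together with the $V=1$ computation this shows that the double-tadpole is the only NLO core graph.
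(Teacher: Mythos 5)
Your setup (exactly one jacket of genus $\tfrac12$, the other two planar, $V$ odd, and the direct check at $V=1$) and your observation that 2PI forces tadpole-freedom for $V\ge 2$ are both fine, but the step you yourself flag as the heart of the argument is not merely difficult --- it is false. Non-orientability of a jacket of an MO graph is not localized at (generalized) tadpoles: the M\"obius twist can be distributed along a cycle passing through several distinct vertices. A concrete counterexample to your claimed mechanism is the ``doubled triangle'': three vertices, two parallel lines between each pair, with the $\pm$ labels arranged so that each pair of vertices is joined by one line of each orientation, using up the two $+$ and two $-$ half-edges of every vertex. This is a connected MO vacuum graph with no tadpoles and edge-connectivity $4$, hence 2PI and free of melonic two-point subgraphs; yet $V=3$ is odd, so by \eqref{eq:omega} its degree $\omega=3+\tfrac92-F$ is a half-odd-integer and at least one of its jackets is necessarily non-orientable. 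Therefore ``tadpole-free and melon-free $\Rightarrow$ all jackets orientable'' cannot hold, and the contradiction ``$\omega\in\mb{Z}$'' you aim for never materializes. Your fallback is not a proof either: for odd $V$ the bound $F\le\tfrac{3V+5}{2}$, i.e.\ $\omega\ge\tfrac12$, comes for free from integrality of $F$, and the assertion that a tadpole-free 2PI candidate ``loses at least one more face'' is precisely the content of the proposition, not an argument for it.

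The paper's proof is quantitative rather than a localization of non-orientability: it never attempts to show that the candidate graph is orientable, but instead bounds $F_i$, the number of faces made of inner strands. Planarity of the (always orientable) outer jacket gives $F_o=V+2$, hence $\omega=1+\tfrac12 V-F_i$, and it allows the graph to be drawn in the plane so that inner faces cross only at vertices. If some inner face crossed the inner strands of only two lines, cutting those two lines would exhibit either a melonic two-point subgraph or a violation of 2PI; so in a core graph with $F_i\ge2$ every inner face has at least four crossings, whence $V\ge 2F_i$ and $\omega\ge1$. This forces $F_i=1$, then $V=1$, and the double-tadpole is the only $V=1$ vacuum graph. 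If you wish to keep your framing, the orientability claim must be replaced by an argument of this quantitative type.
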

\begin{proof}
	In order to have $\omega=1/2$, one of the jackets must have genus $\frac{1}{2}$, and the two others must have genus $0$. In particular, the jacket formed by the outer strands is always orientable \cite{Stephane}, so it must be planar, genus $0$. Thus, the jacket with genus $\frac{1}{2}$ is one of the jackets containing the inner stands.
	
Now, from the planarity of the outer jacket, we get
\bas
	0=1 - \frac{1}{2}(F_o -V) \quad \Rightarrow \quad F_o = V + 2 \,,
\eas
where $F_o$ is the number of faces formed by the outer strands. Substituting this into (\ref{eq:omega}), we get
\ba\label{eq:inneromega}
	\omega = 3 + \frac{3}{2}V - (F_o + F_i) = 1 + \frac{1}{2}V - F_i \,,
\ea
where $F_i$ is the number of faces formed by the inner strands.

Now, let us focus on the faces formed by the inner strands, {\it inner faces}. Notice that the planarity of the outer jacket implies that the graph can be drawn on a plane, so that the inner faces only cross at vertices. Let us assume, for simplicity, the graph to be drawn in this way, so that now the vertices of the graph and the crossings of inner faces are in one-to-one correspondence.

In fact, each inner face of a NLO core graph with $F_i\geq 2$ must have at least four crossings in total with the other inner faces. Note that closed faces on a plane always cross an even number of times. Thus, to prove this claim, we provide two simple proofs by contradiction for the cases of zero and two crossings.

Case 0: Assume that an inner face $f$ in a NLO core graph with $F_i\geq 2$ does not cross the other inner faces at all. Clearly, this makes the graph disconnected, as the face $f$ corresponds to a disjoint connected component of the graph. Therefore, the graph cannot be a NLO core graph, and we have reached a contradiction.

Case 2: Assume that an inner face $f$ in a NLO core graph with $F_i\geq 2$ crosses the other inner faces twice in total. Clearly, the face $f$ must cross the same inner face twice, since closed faces on a plane always cross an even number of times. We may further assume that $f$ does not cross itself, since we are eventually interested in obtaining a lower limit on the number of vertices, while self-crossings of $f$ increase the number of vertices. Accordingly, the face $f$ divides the plane into two separate regions with only two inner strands crossing $f$, corresponding to two vertices along $f$. Thus, the graph can be cut into two two-point subgraphs by cutting, just inside (or outside) of the inner face $f$, the two lines whose inner strands $f$ crosses at these vertices (see Fig.\ \ref{fig:2crossing} for an illustration).
\begin{figure}
\centering
\def\svgwidth{0.4\columnwidth}
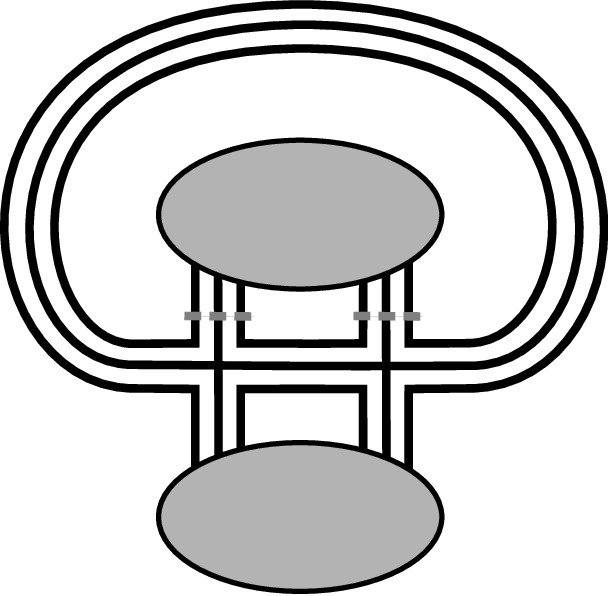
\caption{\label{fig:2crossing}If an inner face crosses only two strands, the graph is decomposable into two two-point functions by cutting along the grey dashed lines.}
\end{figure}
The resulting two-point function inside (or outside) $f$ can either be
\begin{itemize}
\item	a bare propagator $\Rightarrow$ the inner face $f$ is a part of a 
melonic two-point function, or 
\item a non-trivial two-point function $\Rightarrow$ the graph is not 2PI.
\end{itemize}
Thus, the graph cannot be a NLO core graph, and we have reached a contradiction.

The above implies that for a NLO core graph with $F_i\geq 2$ each inner face must cross the other inner faces at least four times in total. Since each crossing is shared by two faces, the number of vertices must satisfy $V\geq 2F_i$ for an NLO core graph with $F_i\geq 2$. Thus, we get from (\ref{eq:inneromega}) above that $\omega\geq 1$, when $F_i\geq 2$, and so we must have $F_i=1$. Moreover, again using (\ref{eq:inneromega}), $F_i=1$ implies $V=1$ for $\omega=\frac{1}{2}$. We thus conclude that the double-tadpole is the only vacuum MO graph with $V=1$.

\end{proof}

\medskip

Let us now state the main result of this section:

\begin{theorem}\label{thm:main}
	All NLO graphs of the MO tensor model are of the form Fig.\ \ref{fig:2tp}, right-hand-side.
\end{theorem}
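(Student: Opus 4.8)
The strategy is to peel every melonic two-point subgraph off a given NLO graph and show that what remains must be the unique NLO core graph, namely the double-tadpole of Proposition \ref{prop:NLOcore}. So let $\Gamma$ be a connected vacuum MO graph with $\omega(\Gamma)=\frac{1}{2}$. If $\Gamma$ contains a non-trivial melonic two-point subgraph, then it contains an elementary melonic (3-dipole) two-point subgraph $\Gamma'_2$; gluing the external legs of $\Gamma'_2$ produces the elementary melon vacuum graph, which is melonic and hence has $\omega=0$ (this is the base case of the recursive definition of melonicity already used in \cite{Stephane,Razvan}). By Lemma \ref{lem:2pf}, $\omega(\Gamma/\Gamma'_2)=\omega(\Gamma)+0=\frac{1}{2}$, while $\Gamma/\Gamma'_2$ is again a connected vacuum MO graph with two fewer vertices. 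Iterating, and noting that the vertex count strictly decreases at each step, this reduction terminates at a connected vacuum MO graph $\Gamma_0$ with $\omega(\Gamma_0)=\frac{1}{2}$ and no melonic two-point subgraph at all --- that is, at an NLO core graph.

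By Proposition \ref{prop:NLOcore}, $\Gamma_0$ is necessarily the double-tadpole $\Gamma_{2tp}$. Reading the chain of contractions backwards, $\Gamma$ is recovered from $\Gamma_{2tp}$ by a finite sequence of insertions of elementary melonic two-point functions into propagators; equivalently, by inserting (possibly nested) melonic two-point functions $\Gamma^2_{mel}$ into the propagators of $\Gamma_{2tp}$. This is precisely the family of graphs drawn on the right-hand side of Fig.\ \ref{fig:2tp}. Conversely, the Corollary above already shows that every member of that family has $\omega=\frac{1}{2}$, so this family is exactly the set of NLO graphs, as claimed.

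The one point that deserves care --- and the main obstacle to making the argument fully airtight --- is the combinatorial bookkeeping of the reduction. One must check that contracting a melonic two-point subgraph always returns a legitimate connected vacuum MO graph, so that the induction stays within the class; this follows from the gluing rules since a melonic two-point subgraph is attached to the rest of $\Gamma$ through a single propagator. One must also check that a non-trivial melonic two-point subgraph always contains an elementary melon available for contraction, so the process cannot stall before an honest core graph is reached, and that overlapping or nested melonic subgraphs cause no interference, which is handled by always contracting an innermost one. Finally, it is essential that the degree is preserved \emph{exactly} at each step rather than dropping below $\frac{1}{2}$: this is guaranteed by Lemma \ref{lem:2pf} together with the fact that each closed-up melonic piece contributes $0$ to the degree, so the reduction can never overshoot.
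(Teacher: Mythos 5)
Your proposal is correct and follows essentially the same route as the paper: the paper's proof is a one-line appeal to Proposition \ref{prop:NLOcore} and Definition \ref{def:NLOcore}, implicitly relying on the degree additivity of Lemma \ref{lem:2pf} to reduce any NLO graph to a core graph by stripping melonic two-point insertions, which is exactly the peeling argument you spell out. You simply make explicit the termination and bookkeeping details that the authors leave tacit.
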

{\begin{proof} 
The proof follows directly from Proposition \ref{prop:NLOcore} above and from Definition \ref{def:NLOcore} of an NLO core graph.  
 \end{proof}

To conclude the section, let us recall that in the case of colored tensor models any 1-dipole or $d$-dipole move acting on an NLO graph leads to another NLO graph \cite{Stephane,GS}; a similar situation appears also for the so-called ``uncolored'' tensor model \cite{DGR}. 
This is different for the MO\ case we treat here. 
In particular, it is not clear how exactly one could extend the notion of a 1-dipole move to MO tensor graphs in the absence of color labels. One could still trivially consider a weak analogue of the 1-dipole move for the class of MO graphs as illustrated in Fig.\ \ref{fig:1dip}, but this does not conserve the degree of a graph.
\begin{figure}
\centering
\def\svgwidth{0.75\columnwidth}
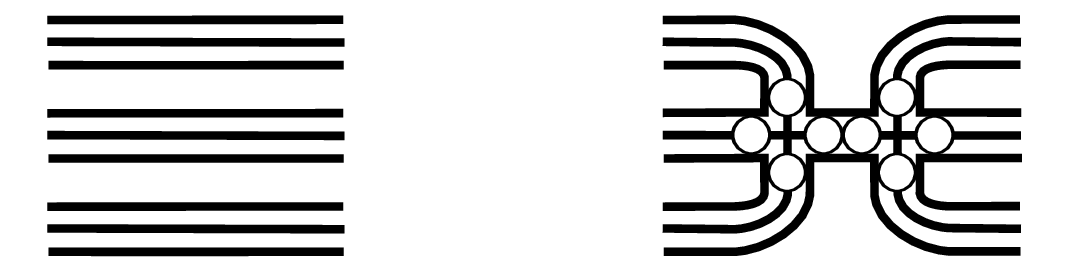
\caption{\label{fig:1dip}The analogue of a 1-dipole move of the colored tensor model. Here we have kept track of the orientations of the edges by inserting $\pm$'s on the edges emanating from vertices.}
\end{figure}
For the sake of completeness, let us give in Fig.\ \ref{fig:1dipex} an example of this analogue 1-dipole move {\it inequivalent to creation/annihilation of elementary melons} acting on an NLO MO graph, which does not lead to an NLO MO graph. It may be expected that the lack of colors and thus the notion of dipole moves will complicate the classification of MO graphs in higher orders of the large $N$ expansion as compared to colored models.
\begin{figure}
\centering
\def\svgwidth{0.8\columnwidth}
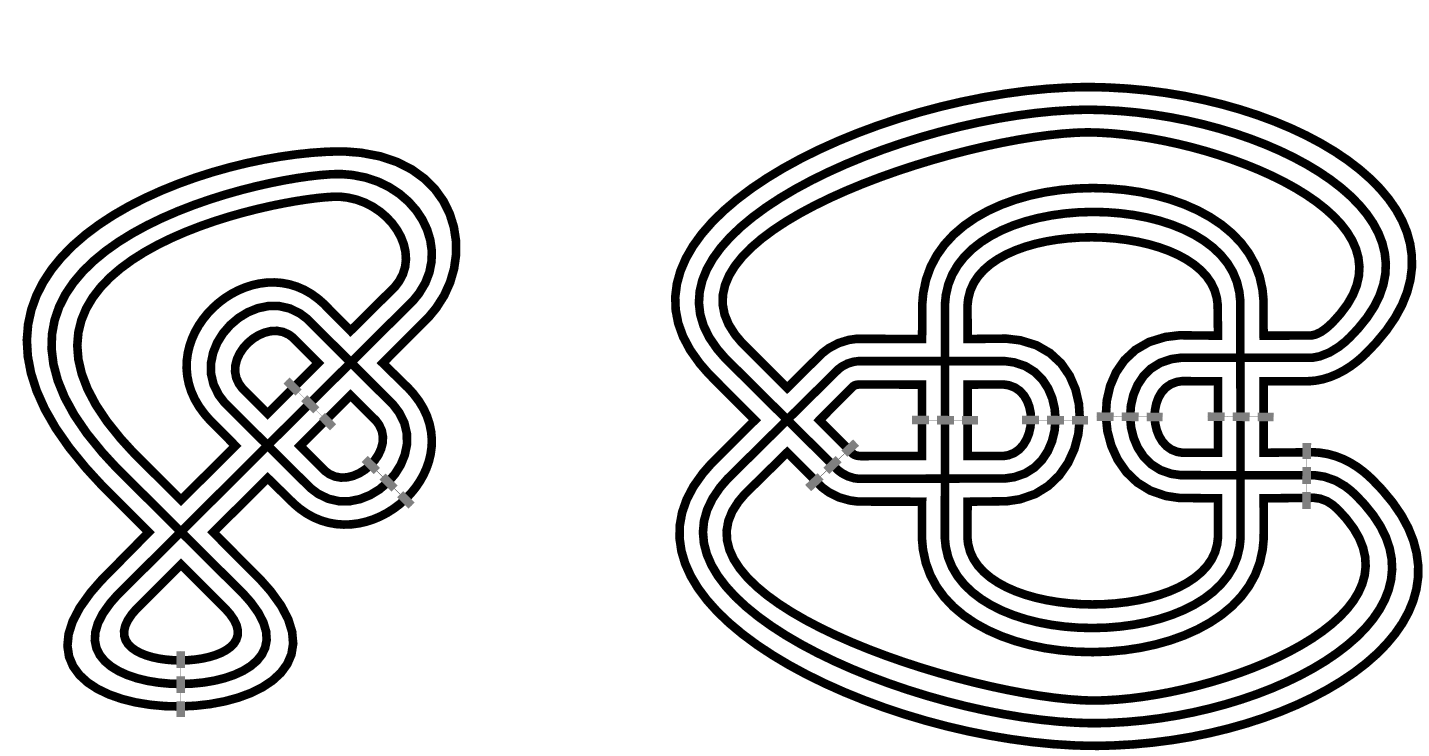
\caption{\label{fig:1dipex}An example of an analogue `1-dipole move' applied to an NLO graph (to the lines marked with grey dashed line) resulting in a non-NLO graph ($\omega=\frac{5}{2}$).}
\end{figure}

We should also emphasize that the NLO sector of the colored tensor model, considered in \cite{KOR}, corresponds to the case $\omega=1$ (for $d=3$), whereas we have here considered the case $\omega=\frac{1}{2}$, which constitutes the NLO sector for the MO tensor model. Thus, the two NLO sectors are not directly comparable. However, using Lemma \ref{lem:2pf}, one can easily see that there are non-colorable graphs included in the $\omega=1$ sector of the MO model, as one obtains a core graph with $\omega=1$ by inserting a tadpole 2-point subgraph into the double-tadpole. We expect there to appear many more such non-colorable core graphs in higher integer orders, thus modifying the results of \cite{KOR}. However, for the remainder of this paper, we will concentrate only on the $\omega=\frac{1}{2}$ NLO sector of the MO model, while work on the higher order sectors is under way.

\section{NLO series: radius of convergence \& susceptibility exponent}

In order to study the behavior of the NLO series, following \cite{KOR}, we will 
 study the connected NLO two-point function. The graphs contributing to the connected NLO two-point function can be obtained from the NLO vacuum graphs by cutting any one of the internal lines of an NLO vacuum graph. Thus, we can in a straightforward
manner import the classification of NLO vacuum graphs obtained in the previous section to the case of connected NLO two-point graphs. We will express the NLO 
two-point function in terms of the LO two-point function through algebraic identities relating the LO and NLO two-point functions. More specifically, any 
two-point function is of the form bare propagator multiplied by a specific function. We will denote this function associated to the connected LO two-point function as $G_{LO}$, the function associated to the connected NLO two-point function as $G_{NLO}$, and the function associated to the 1-particle-irreducible (1PI) NLO two-point function as $\Sigma_{NLO}$. The identities between these different functions arise diagrammatically as the classical QFT identities of Feynman graphs.

The first identity for the two-point functions, illustrated in Fig.\ \ref{fig:gnlo}, states that the connected NLO two-point function $G_{NLO}$ is obtained by gluing connected LO two-point functions $G_{LO}$ on both sides of the 1PI NLO two-point function $\Sigma_{NLO}$.
\begin{figure}
\centering
\def\svgwidth{0.75\columnwidth}
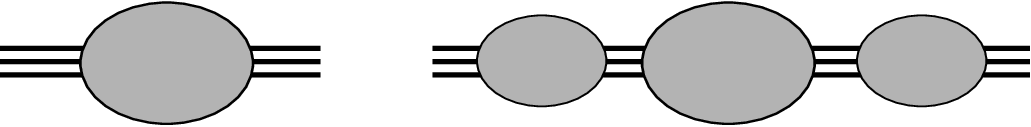
\caption{\label{fig:gnlo}The connected NLO two-point function is obtained by gluing connected LO two-point functions on both sides of the 1PI NLO two-point function.}
\end{figure}
This may be expressed as:
\ba
	G_{NLO} = G_{LO}^2 \Sigma_{NLO} \,.
\ea
The second identity is obtained from two different ways of obtaining 1PI NLO two-point graphs from connected LO and NLO two-point graphs, as in Fig.\ \ref{fig:Snlo}, and writes
\ba
	\Sigma_{NLO} = \lambda G_{LO} + 3\lambda^2 G_{LO}^2 G_{NLO} \,,
\ea
where the combinatorial factor of 3 arises from the three different internal lines of the elementary melon, on which the NLO two-point function can be inserted. These two possibilities exhaust all contributions to the 1PI NLO two-point function, since any such graph contains only one tadpole due to Theorem \ref{thm:main}, and thus either all melonic subgraphs are inside the tadpole (the first summand in Fig.\ \ref{fig:Snlo}) or the tadpole is a subgraph of a melonic graph (the second summand in Fig.\ \ref{fig:Snlo}).
\begin{figure}
\centering
\def\svgwidth{0.9\columnwidth}
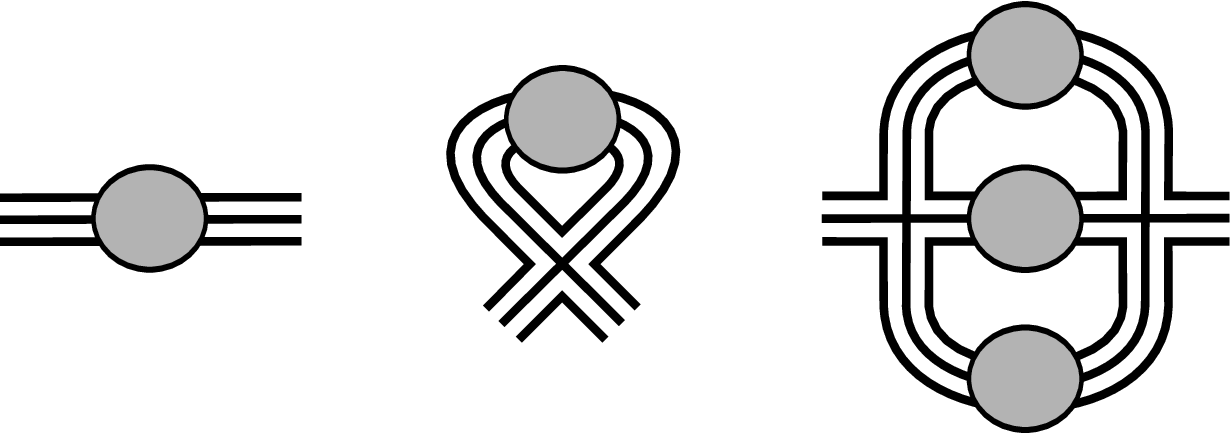
\caption{\label{fig:Snlo}The 1PI NLO two-point function may be obtained in two ways from connected LO and NLO two-point functions.}
\end{figure}

Now, putting these two identities together, we obtain first the identity
\ba
	G_{LO}^{-2} G_{NLO} = \lambda G_{LO} + 3\lambda^2 G_{LO}^2 G_{NLO} \,,
\ea
from which we can solve for $G_{NLO}$ in terms of the connected LO two-point function
\ba
	G_{NLO} = \frac{\lambda G_{LO}^3}{1-3\lambda^2 G_{LO}^4} \,.
\ea
On the other hand, differentiating the LO two-point function relation $G_{LO} = 1 + \lambda^2 G_{LO}^4$ we get
\ba
	\frac{\prt}{\prt\lambda} G_{LO} = \frac{2\lambda G_{LO}^4}{1 - 4\lambda^2 G_{LO}^3} = \frac{2\lambda G_{LO}^5}{1 - 3\lambda^2 G_{LO}^4} \,,
\ea
where for the last equality we used the LO two-point function identity again. Thus, we get the expression
\ba
	G_{NLO} = \frac{\lambda}{G_{LO}^2} \frac{\prt}{\prt\lambda^2} G_{LO} \,,
\ea
which implies, together with $G_{LO} \sim \mt{const.}\ + (1 - (\lambda^2/\lambda_c^2))^{1/2}$,
\ba
	G_{NLO} \sim \left(1 - \frac{\lambda^2}{\lambda_c^2}\right)^{-1/2} \,.
\ea

Finally, we obtain from the Schwinger-Dyson equation \cite{KOR}
\ba
	0 = \int \dd\bar{\phi}\ \dd\phi\ \frac{\delta}{\delta\phi_{ijk}} \left( \phi_{i'j'k'} e^{-S[\phi,\hat{\phi}]} \right)
\ea
the relation
\ba\label{eq:SDeq}
	G_{NLO} = 1 - 4\lambda^2 \frac{\prt}{\prt\lambda^2} E_{NLO}
\ea
relating the connected two-point function $G_{NLO}$ to the free energy $E_{NLO}$. Accordingly, we have $E_{NLO} \sim (1 - (\lambda/\lambda_c)^2)^{1/2}$ from (\ref{eq:SDeq}), and thus find the same critical value of the coupling constant (i.e., the radius of convergence) for the NLO series as for the LO series.
Nevertheless, one has a distinct value for the NLO susceptibility exponent (or the critical exponent) 
\ba\label{main}
\gamma_{NLO}=\frac 32.
\ea
These properties indicate the possibility for the existence of a double-scaling limit for the MO\ model, as we will discuss in the following section.

\section{Perspectives --- towards a double-scaling limit}

This paper analyses in detail the next-to-leading order in the large $N$ expansion of 
multi-orientable random tensor models. It thus represents the first step towards the implementation of 
a double-scaling limit for the multi-orientable model.
It appears to us that the most important perspective for our study is thus 
the implementation of such a double-scaling limit appropriate for the model.

Let us recall here that the celebrated double-scaling limit of matrix models consists 
of taking, in a correlated way, the double limit $N\to\infty$ and $\lambda\to\lambda_c$ (where 
$\lambda_c$ is some critical value of the coupling constant).
This allows for non-vanishing genus topologies to count.  
Switching to a gravitational interpretation \`a la random dynamical triangulations, where the two constants of the model are 
related to Newton constant and to the cosmological constant, 
the large-$N$ limit corresponds to the vanishing limit of Newton constant, 
while the limit $\lambda\to\lambda_c$ corresponds to the large volume limit. 
Thus, for matrix models, the double scaling limit mechanism allows to access the regime of finite 
(non-vanishing) Newton constant.
(We refer the interested reader to 
\cite{double-matrix} or, for a general review on matrix models, to \cite{DGZ}.) For tensor models, it is clear that some sort of multiple-scaling limit is necessary to access this regime, but the double-scaling limit only picks up specific contributions at each order of the 1/N expansion, and its physical interpretation has not yet been fully clarified.

As already mentioned in the introduction, the double scaling limit of the colored \cite{gurau} and respectively ``uncolored'' \cite{uncolored} tensor models 
have been implemented in \cite{GS} and respectively \cite{DGR}.
It is worth noticing that in \cite{KOR}, the critical susceptibility exponent for the NLO of the large $N$ expansion 
was found to be $3/2$, which is the same result as the one we obtained in the previous 
section (see equation \eqref{main} above). Nevertheless, 
in \cite{DGR} it was found that the non-perturbative result (using 
the so-called loop-vertex expansion \cite{LVE} method) for this critical susceptibility exponent is again $1/2$, 
the value obtained for the leading melonic order. It was then argued by the authors of \cite{DGR}
that this does not represent any discrepancy with the previous results of \cite{KOR},
since double-scaling resums an infinity of contributions, and the one found in \cite{KOR}
is just the second order term in the infinite sum.

It thus appears particularly interesting to us to investigate whether or not such a phenomenon appears 
also in the case of the multi-orientable models and if the non-perturbative value of the critical 
susceptibility exponent is again $1/2$ or not. The result will dramatically depend on the role played by the sectors with half-integer degree.

\section*{Acknowledgements}
The authors acknowledge Vincent Rivasseau and Razvan Gurau for
discussions, and would also like to thank Stephane Dartois for a careful reading of the manuscript.
Moreover, the authors 
acknowledge the ``Combinatoire  alg\'ebrique'' Univ.\ Paris 13, Sorbonne Paris Cit\'e BQR  grant and the ``Cartes 3D'' CNRS PEPS grant.
A.\ Tanasa further acknowledges
the grants PN 09 37 01 02 and CNCSIS Tinere Echipe 77/04.08.2010. M.\ Raasakka's research is supported by Emil Aaltonen Foundation.

\small

\noindent
{\it\small $(a)$ LIPN, Institut Galil\'ee, CNRS UMR 7030, 
Universit\'e Paris 13, Sorbonne Paris Cit\'e,}\\{\it\small 99 av. Clement, 93430 Villetaneuse, France, EU}\\
{\it\small $(b)$ Horia Hulubei National Institute for Physics and Nuclear Engineering,
P.O.B. MG-6, 077125 Magurele, Romania, EU}

\begin{thebibliography}{99}

	\bibitem{DGZ} P. Di Francesco, P. Ginsparg, J. Zinn-Justin, \emph{2D Gravity and Random Matrices}, Phys.\ Rept.\ {\bf 254} (1995) 1-133, \href{http://arxiv.org/abs/hep-th/9306153}{arXiv:hep-th/9306153}.

	\bibitem{sasakura} N. Sasakura, \emph{Tensor model for gravity and orientability of manifold}, Mod.\ Phys.\ Lett.\ {\bf A 6} (1991) 2613.
	
	\bibitem{ambjorn} J.\ Ambjorn, B.\ Durhuus and T.\ Jonsson, \emph{Three-Dimensional Simplicial Quantum Gravity And Generalized Matrix Models}, Mod.\ Phys.\ Lett.\ {\bf A 6} (1991) 1133.

\bibitem{RR}  R.\ De Pietri, L.\ Freidel, K.\ Krasnov, C.\ Rovelli, \emph{Barrett-Crane model from a Boulatov-Ooguri field theory over a homogeneous space}, Nucl.\ Phys.\ {\bf B574} (2000) 785-806, \href{http://arxiv.org/abs/hep-th/9907154}{arXiv:hep-th/9907154}.

\bibitem{gurau}
	R.\ Gurau, \emph{Colored Group Field Theory}, Commun.\ Math.\ Phys.\
{\bf 304} (2011) 69, \href{http://arxiv.org/abs/0907.2582}{arXiv:0907.2582 [hep-th]}.

\bibitem{Razvan} R.\ Gurau, \emph{The complete 1/N expansion of colored tensor models in arbitrary dimension}, 
	Annales Henri Poincar\'e {\bf 13} (2012) 399-423, 
	\href{http://arxiv.org/abs/1102.5759}{arXiv:1102.5759 [gr-qc]}.
	
\bibitem{GurauRyan}	R.\ Gurau, J.\ P.\ Ryan, \emph{Colored Tensor Models - a Review}, SIGMA {\bf 8} (2012) 020, \href{http://arxiv.org/abs/1109.4812}{arXiv:1109.4812 [hep-th]}.

\bibitem{Rovelli} C.\ Rovelli, \emph{Quantum Gravity} (Cambridge Monographs on Mathematical Physics), Cambridge University Press, 2007.

\bibitem{GR} 
  R.~Gurau and V.~Rivasseau,
  \emph{The 1/N expansion of colored tensor models in arbitrary dimension},
  Europhys.\ Lett.\  {\bf 95} (2011) 50004,
  \href{http://arxiv.org/abs/1101.4182}{arXiv:1101.4182 [gr-qc]]}.

\bibitem{BGRR}
  V.~Bonzom, R.~Gurau, A.~Riello and V.~Rivasseau,
\emph{Critical behavior of colored tensor models in the large N limit},
  Nucl.\ Phys.\ {\bf B 853} (2011) 174-195,
  \href{http://arxiv.org/abs/arXiv:1105.3122}{arXiv:1105.3122 [hep-th]}.
  
  	\bibitem{KOR} W.\ Kaminski, D.\ Oriti, J.P.\ Ryan, 
	\emph{Towards a double-scaling limit for tensor models: probing sub-dominant orders}, \href{http://arxiv.org/abs/1304.6934}{arXiv:1304.6934 [hep-th]}.

\bibitem{GS} R.\ Gurau, G.\ Schaeffer, 
\emph{Regular colored graphs of positive degree}, \href{http://arxiv.org/abs/arXiv:1307.5279}{arXiv:1307.5279 [math.CO]}.

\bibitem{DGR}  S.\ Dartois, R.\ Gurau and V.\ Rivasseau,
  \emph{Double Scaling in Tensor Models with a Quartic Interaction}, JHEP (in press),
  \href{http://arxiv.org/abs/arXiv:1307.5281}{arXiv:1307.5281 [hep-th]}.

\bibitem{mo}
 A.\ Tanasa,
 	\emph{Multi-orientable Group Field Theory},
  J.\ Phys.\ A {\bf 45} (2012) 165401,
  \href{http://arxiv.org/abs/arXiv:1109.0694}{arXiv:1109.0694 [math.CO]}.

\bibitem{praa} 
  A.\ Tanasa,
  \emph{Tensor models, a quantum field theoretical particularization},
  Proc.\ Rom.\ Acad.\ A {\bf 13}, no. 3, 225 (2012), 
  \href{http://arxiv.org/abs/arXiv:1211.4444}{arXiv:1211.4444 [math.CO]}.

	\bibitem{Stephane} S.\ Dartois, V.\ Rivasseau, A.\ Tanasa, 
	\emph{The 1/N expansion of multi-orientable random tensor models}, 
	Annales Henri Poincar\'e (in press), 
	\href{http://arxiv.org/abs/1301.1535}{arXiv:1301.1535 [hep-th]}.

	\bibitem{double-matrix}
	E.\ Br\'ezin and V.\ A.\ Kazakov, \emph{Exactly solvable
	field theories of closed strings}, Phys.\ Lett.\ {\bf B 236} (1990) 144.\\
	M.\ R.\ Douglas and S.\ H.\ Shenker, \emph{Strings in less than one dimension}, 
	Nucl.\ Phys.\ {\bf B 335}, (1990) 635.\\
	D.\ J.\ Gross and A.\ A.\ Migdal, \emph{Non-perturbative two-dimensional quantum 
	gravity}, Phys.\ Rev.\ Lett.\ {\bf 64} (1990) 127.

	\bibitem{uncolored}
	V.\ Bonzom, R.\ Gur\u{a}u and V.\ Rivasseau, 
	\emph{Random tensor models in the large $N$ limit: uncoloring the colored tensor models},
	Phys.\ Rev.\ {\bf D 85} (2012) 084037, \href{http://arxiv.org/abs/1202.3637}{arXiv:1202.3637 [hep-th]}. 
	
	\bibitem{LVE}
V.\ Rivasseau,
 \emph{Constructive Matrix Theory},
  JHEP {\bf 0709} (2007) 008,
  \href{http://arxiv.org/abs/0706.1224}{arXiv:0706.1224 [hep-th]}.
	
	
	\end{thebibliography}
\end{document}